\documentclass[journal,12pt,a4paper,onecolumn,draftcls]{IEEEtran}
\usepackage[applemac]{inputenc}
\usepackage{lscape}
\usepackage{amsfonts,syntonly}
\usepackage{verbatim,cite,amsmath,amssymb} 
\usepackage{array}
\usepackage{color}
\usepackage{flushend}
\usepackage[pdftex]{graphicx}
\usepackage{stfloats}
\usepackage{amsmath}
 \usepackage{cite}
 
 % *** SUBFIGURE PACKAGES ***
\ifCLASSOPTIONcompsoc
\usepackage[tight,normalsize,sf,SF]{subfigure}
\else
\usepackage[tight,footnotesize]{subfigure}
\fi

\newcommand{\mat}[1]{{\uppercase{\mathbf{#1}}}}
\newcommand{\mats}[1]{{\uppercase{\boldsymbol{#1}}}}
\newcommand{\vects}[1]{{\lowercase{\boldsymbol{#1}}}}
\newcommand{\vect}[1]{{\lowercase{\mathbf{#1}}}}
\newcommand{\Jensen}[1]{\mats{\mathcal{#1}}}

\DeclareMathOperator{\E}{\mathbb{E}}
\DeclareMathOperator{\Gal}{Gal}

\renewcommand{\P}{\text{P}}
\DeclareMathOperator{\out}{out}
\DeclareMathOperator{\Prob}{\mathbb{P}}

\DeclareMathOperator{\PEP}{PEP}

\DeclareMathOperator{\F}{F}
\DeclareMathOperator{\Tr}{Tr}

\DeclareMathOperator{\diag}{diag}
\DeclareMathOperator{\rank}{rank}

\DeclareMathOperator{\id}{\mathbf{I}}

\DeclareMathOperator{\SNR}{SNR}
\renewcommand{\vec}{\text{vec}}

\newtheorem{theorem}{Theorem}
\newtheorem{definition}{Definition}
\newtheorem{lemma}{Lemma}

\begin{document}

% paper title
\title{How to Achieve the Optimal DMT of Selective Fading MIMO Channels?}
\author{
\IEEEauthorblockN{ 
Lina Mroueh\authorrefmark{2}
and Jean-Claude Belfiore\authorrefmark{1} \\
}
\thanks{Part of this work was performed while the first author was
  with the Communication Theory Goup in ETH Z\"urich.  Lina Mroueh is
  now at the department of signal, image and telecommunications in the
  Institut Sup\'erieur d'Electronique de Paris ISEP,
  France  and
  Jean-Claude Belfiore is with the department of communication and
  electronics in Telecom ParisTech, France.  This paper was presented in part at the IEEE Information
  Theory Workshop (ITW) in Sept. 2010, Dublin, Ireland. }
\IEEEauthorblockA{ %
\IEEEauthorrefmark{2}
ISEP Paris, 75006 Paris, France \\ 
\IEEEauthorrefmark{1}
T\'el\'ecom ParisTech, 75013 Paris, France \\ 
 lina.mroueh@isep.fr and belfiore@enst.fr \\
 }
}

\IEEEcompsoctitleabstractindextext{%
\begin{abstract}
\boldmath
In this paper,  we consider a particular class of selective fading
channel corresponding to a channel that is selective either in
time or in frequency. For this class of channel, we propose a
systematic way to achieve the optimal DMT derived in Coronel and
B\"olcskei, \textit{IEEE ISIT}, 2007 by extending the
non-vanishing determinant (NVD) criterion to the selective channel
case. A new code construction based on split NVD parallel codes is
then proposed to satisfy the NVD parallel criterion.
This result is of significant interest not only in its own right, but also because it settles a long-standing debate in the literature related to the optimal DMT of selective fading channels.
\end{abstract}

\begin{IEEEkeywords}
Diversity multiplexing tradeoff, selective channel, code construction,
cyclic division algebra, non vanishing determinant (NVD) code.
\end{IEEEkeywords}

}
\maketitle

\IEEEdisplaynotcompsoctitleabstractindextext

\IEEEpeerreviewmaketitle

\section{Introduction and motivations}
\IEEEPARstart{I}n this paper, we consider the selective fading MIMO
channel where a transmitter having $n_t$ antennas wants to communicate
with a receiver having $n_r$ antennas. We assume that the
communication occurs on a channel that exhibits memory either in time
or in frequency. Our objective here is to construct reliable coding
scheme for the high data rate communication in the high SNR regime when the
channel is not known at the transmitter side.  The
performance criteria to evaluate the coding scheme that will be used
in the following is the well-known diversity multiplexing
tradeoff (DMT).  
\par
The diversity multiplexing tradeoff (DMT) proposed by Zheng and Tse in \cite{ZhengTse-2003} is a powerful approach to characterizing the dual benefits in terms of diversity and spatial multiplexing in the high SNR regime. In order to achieve the optimal diversity multiplexing tradeoff for the flat fading MIMO channel, Belfiore \textit{et al.} introduced the non-vanishing determinant criterion in \cite{RekayaBelfiore-2005}. Later, Elia \textit{et al.} \cite{Elia} proved that this criterion is a sufficient condition to achieve the optimal DMT using a full rate code. 
\par
While most of the above results address the case of flat fading
channels, the general channel model of time-frequency selective
channels has been considered by Coronel and B\"olcskei in~\cite{Cor-FS-Jour},~\cite{Coronel-FS} where the optimal DMT is derived. Moreover, a DMT optimal coding scheme based on a joint precoder and parallel codes construction, is proposed. As the block fading channel is a special case of the time-frequency selective channel, it is expected that the DMT expression in\cite{Coronel-FS} matches with the corresponding result in \cite{ZhengTse-2003}. This is, however, not the case\footnote{The optimal DMT expression in \cite{Coronel-FS} is larger than the one in \cite{ZhengTse-2003}. } and has given rise to lots of debate in the literature~\textit{e.g.}~\cite{francis1}. A rigorous interpretation of this incoherence in results remains an open problem. The present paper settles the issue and shows that the DMT derived in \cite{Coronel-FS} is, indeed, achievable. 
\par
\textit{Contributions: }We consider a particular class of the general channel model considered in~\cite{Cor-FS-Jour},\cite{Coronel-FS} where the channel is selective either in time or in frequency. For this class of channels, we propose a systematic way to achieve the optimal DMT by extending the non-vanishing determinant criterion to the selective channel case. A new code construction based on split NVD parallel codes is then proposed to satisfy the NVD parallel criterion. Moreover, for the block fading channel, we provide an extension of the geometrical interpretation to show the achievability of the optimal DMT. This result is of significant interest not only in its own right, but also as it shows that the optimal DMT  in \cite{Coronel-FS} is achievable for all the classes of fading channels including the block fading channel.  
\par
\textit{Outline of the paper: } The rest of the paper is organized as
follows. In Section \ref{sec:model}, we define the selective fading
channel model. We review in Section \ref{sec:prel-bac} some basic preliminaries
and background materials that are essential to the development of this
paper. Then, we derive in Section \ref{sec:pout} the limiting outage bound on
the achievable DMT. We
derive in Section \ref{sec:NVD-codes} the code design criterion required
to achieve this optimal DMT for this class of selective channels and propose a
new family of split NVD parallel codes to satisfy this code design
criterion.  Finally, Section \ref{sec:conclusion} concludes the paper.
%Then, we provide in Section \ref{sec:num-res} a geometrical interpretation of the achievable DMT for the particular case of block fading channels. 
\par
\textit{Notation: } The notation used in this paper is as
follows. Boldface lower case letters $\vect{v}$ denote vectors,
boldface capital letters $\mat{M}$ denote matrices. $\mat{M}^\dag$
denotes conjugate transposition. $.^{[T]}$
denote the transposition operator.  $\|\mat{H}\|^2_{\text{F}} =
\Tr\{\mat{H}\mat{H}^\dag\}$ is the Frobenius norm of a
matrix. $\Tr\{\mat{A}\}$ refers to the trace of  matrix
$\mat{A}$. $\id_N$ stands for the $N\times N$ identity
matrix. $\diag\{\mat{A}_n\}_{n=0}^{N-1}$ denotes the block diagonal
matrix containing $\mat{A}_n$ on its diagonal. $\vec{\mat{A}} =
[\vect{a}_1^{[T]} \ldots \vect{a}_N^{[T]}]^{[T]}$, and $\vect{a}_i$ is
a column vector of matrix $\mat{A}$. The non zero eigenvalues of $\mat{A}$ ordered in ascending order are denoted by $\lambda_i (\mat{A})$. $\mathcal{CN}$ represents the complex Gaussian random variable. $\mathbb{E}_{X}$ is the mathematical expectation w.r.t. to the random variable $X$. Equality in distribution between two random variables $X$ and $Y$ is represented by $X \sim Y$. Exponential equality is denoted by $f(x) \doteq x^b$, \textit{i.e.} $\lim_{x\rightarrow \infty} \frac{\log f(x)}{\log x} = b$, and $\dot{\geq}, \dot{\leq}$ denote the exponential inequality. $|\mathcal{A}|$ denotes the cardinality of a set $\mathcal{A}$. Finally, $\mat{A}\otimes\mat{B}$ denotes the Kronecker product of the matrices $\mat{A}$ and~$\mat{B}$.

\section{Channel and signal model}
\label{sec:model}
We consider the general case of selective fading
channel which includes the case of time and frequency selective
channel. In order to deal with such type of channels, techniques that
decompose these channel into parallel sub-channels are generally used
in literature \cite{Durisi2008}. 
The input-output relation for the class of channels considered in this
paper is therefore given by
\begin{equation}
\mat{Y}_n^{[n_r \times T]} = \sqrt{\frac{\SNR}{n_t}}\mat{H}_n^{[n_r \times n_t]}\mat{x}_n^{[n_t \times T]} + \mat{Z}_n^{[n_r \times T]},
\label{eq:model}
\end{equation}
where $n= 0,1,\ldots,N-1$ represents the sub-channel $n$, the sub-channel $\mat{H}_n^{[n_r \times n_t]}$ is a $n_t\times n_r$ MIMO channel that remains constant during all the duration of the transmission $T$, $\mat{X}_n$ represents the transmitted signal, and $\mat{Z}_n$ denotes the additive i.i.d. $\mathcal{CN}(0,\id)$ noise. The channels $\mat{H}_n$ are correlated across the sub-channels $n = 0 \ldots N-1$ according to, 
\begin{equation}
 \mat{H} = [\mat{H}_0 \; \ldots \; \mat{H}_{N-1}] = \mat{H}_{w} (\mat{R}^{1/2}_{\mathbb{H}} \otimes \id_{n_t}),
\label{eq:mat-corr}
\end{equation}
where $\mat{R}_{\mathbb{H}}$ is the $N\times N$ correlation between
the scalar sub-channels characterized by its rank equal to $\rho \leq N$, $\mat{H}_{w}$ is an  $n_r \times Nn_t$  %denotes the $n_r \times Nn_t$ matrix such that $\mat{H}_{w} = [\mat{H}_{w,0} \ldots \mat{H}_{w,N-1}]$ and $\mat{H}_{w,n}$ is an $n_r\times n_t$
matrix with i.i.d. $\mathcal{CN}(0,1)$ entries.
The transmitted signal satisfies the following power constraint, 
\begin{equation}
\sum_{i=0}^{N-1}\E\big[\|\mat{X}_{i}\|^2 _{\F}\big] \leq TN.
\label{eq:pow-const}
\end{equation}
Throughout this paper, we set $m=\min(n_t,n_r)$ and $M = \max(n_t,n_r)$.
\par
The input-output relation considered in (\ref{eq:model}) models the case when the channel is selective either in time or in frequency. 
For the frequency selective channel, the MIMO OFDM system decomposes
the channel into $N$ parallel subcarrier, where $N$ represents the
total number of subcarriers and $n$ stands for the frequency. The
sub-channel remains constant over each subcarrier and the correlation
matrix $\mat{R}_{\mathbb{H}}$ is a circulant matrix having a rank
equal to $L$ which is nothing but the number of channel taps or the
\emph{memory} of the selective channel.  
\par
For the time selective case (or the block fading channel), the channel
remains constant during a block~$n$ of $T$ time slots and changes in a
statistically independent manner across blocks. For this case, $N$
represents the total number of blocks and~$\mat{R}_{\mathbb{H}} =
\id_N$ with full rank~$N$.

\section{Preliminaries and background}
\label{sec:prel-bac}
In this section, we start by 
recalling some basic preliminaries on the optimal diversity multiplexing
tradeoff (DMT) of the code in Subsection
\ref{sec:prel} and on the limiting outage bound of the selective
fading channel in Subsection \ref{eq:DMT-sel}. Then, we briefly review
prior results from literature that motivate our contribution. 

\subsection{Diversity multiplexing tradeoff (DMT)}
\label{sec:prel}
Let $\mathcal{X}_p(\SNR)$ be a family of coding schemes operating at a given $\SNR$, and let $R(\SNR)$ denote the rate transmitted \emph{per sub-channel}, such that, 
$$
R(\SNR) = r \log \SNR,
$$
where $r$ is the multiplexing gain \emph{per sub-channel}. 
\par
The diversity multiplexing tradeoff (DMT) of the coding scheme $\mathcal{X}_p(\SNR)$ is defined as the $\SNR$ exponent of the error probability $ P_{e,\mathcal{X}_p}(r,\SNR)$ using maximum likelihood-decoding such that 
\begin{equation*}
d(r) = -\displaystyle \lim_{\SNR \rightarrow \infty} \frac{\log P_{e,\mathcal{X}_p}(r,\SNR)}{\log\SNR}.
\end{equation*}
For a given multiplexing gain $r$, the optimal DMT is the largest DMT supported by any coding scheme, and is is denoted by $d^*(r)$. 

\subsection{DMT outage bound}
\label{eq:DMT-sel}
\par
The outage probability of a selective fading channel when the target rate $R$ scales as $r\log\SNR$ is defined as, 
$$
P_{\out}(r) \triangleq \Prob\Big\{\log\det\Big(\id_N +
\frac{\SNR}{n_t}\mats{\mathcal{H}}\mats{\mathcal{H}}^\dag \Big) < Nr
\log\SNR \Big\},
$$
where $\mats{\mathcal{H}} = \diag\{\mat{H}_{n}\}_{n=0}^{N-1}$ is the block diagonal channel matrix.
\\*
The optimal DMT of the selective fading MIMO channel
has been derived in \cite{Cor-FS-Jour} and \cite{Coronel-FS}.  For
this general case, Coronel and B\"olcskei showed that the
outage probability is bounded as, 
\begin{equation}
 P_{e,\mathcal{X}_p}(r) \geq  P_{\out}(r) \geq P_J(r) \doteq \SNR^{-d_{J}(r)}
\label{eq:ineq}
\end{equation}
where, 
\begin{equation}
d_J(r) = (\rho M-r)(m-r).
\label{eq:DMT-freq}
\end{equation}
and $m = \min(n_t,n_r)$ and $M = \max(n_t,n_r)$.
Note that the first inequality in (\ref{eq:ineq}) is a consequence of the Fano
inequality \cite{ZhengTse-2003}, and the second inequality is a
consequence of the Jensen inequality as shown in \cite{Cor-FS-Jour}.
Moreover, a coding scheme that achieves the bound called "Jensen
bound" in the terminology of \cite{Coronel-FS}  has been proposed
in \cite{Cor-FS-Jour} and \cite{Coronel-FS}.  It follows therefore
from\cite{Coronel-FS} that the optimal DMT is equal to, 
$$
d(r) = d_J(r) = (\rho M-r)(m-r).
$$ 

\subsection{Previous work and motivations}
 The block fading channel is a particular case of the selective fading
 channel model considered in (\ref{eq:model}) with covariance matrix
 $\mat{R}_{\mathbb{H}} = \id_{N}$. The optimal DMT expression is
 therefore $d^*(r) = (NM-r)(m-r)$, which is the DMT expression of the
 general channel model considered
 in~\cite{Cor-FS-Jour},\cite{Coronel-FS} applied to this particular
 channel setting. Obviously, this result does not match with the
 corresponding result in \cite{ZhengTse-2003}, \textit{i.e.}, $d_l(r)
 = N(M-r)(m -r) \leq d^*(r), \; \forall r$. This incoherence in
 results has been subject to lots of debate in literature
 \textit{e.g.} \cite{francis1} and motivates our contribution. The authors of \cite{francis1} base
 their arguments on a non-accurate outage probability derivation
 ($P_{\out,l}(r) \doteq \SNR^{-d_l(r)}$) to claim that the DMT of the
 block fading channel cannot exceed $d_l(r) \leq d^*(r)$. 
In order to settle this issue,  we show in
this paper that the DMT in \cite{ZhengTse-2003} is not a limiting
outage bound as claimed in \cite{francis1}, and that the DMT in \cite{Coronel-FS} is achievable using codes
derived from cyclic division algebra (CDA). 
% As we will show in the following, deriving the analytical outage probability is not a straightforward generalization of the flat fading channel and should be carefully performed. The outage derivation we provide here is based on the geometrical argument previously used for the flat fading channel in \cite{ZhengTse-2003} and for the selective fading case in \cite{Cor-FS-Jour}.  

\section{Outage bound on the DMT of selective fading channel}
\label{sec:pout}
Unlike the flat fading channel, the analytical outage probability for
the selective fading channel cannot be easily derived using the
eigenvalues distribution. For the case of correlated parallel
sub-channels, Coronel and B\"olcskei in \cite{Cor-FS-Jour} generalize the geometrical
interpretation in \cite{ZhengTse-2003} to the selective fading case. For the particular case
of the statistically independent parallel sub-channels which is the block fading
channel, the analytical outage probability should be carefully
performed to take into account the impact of coding across the blocks,
which cannot be easily seen using the block diagonal structure of the
matrix. For this, an equivalent expression of the outage probability is first
derived. Then, we provide here an outage derivation based on the geometrical argument previously used for the flat fading channel in \cite{ZhengTse-2003} and for the selective fading case in \cite{Cor-FS-Jour}.  
%seems analytically intractable with the main difficulty stemming from the fact that one has to deal with the sum of correlated (recall that the Hn are correlated across n) terms in (23), for which the joint distribution of the corresponding singularity levels in (24) is in general unknown. It turns out, however, that one can find lower and upper bounds on I(SNR) in (23) which are exponentially tight in SNR (and, hence, preserve the DM tradeoff behavior) and analytically tractable.

\subsection{Outage bound of the block fading channel}
For the block fading channel, the outage probability is, 
\begin{equation*}
P_{\out}(r)  \triangleq  \Prob\Big\{\log\det\Big(\id + \frac{\SNR}{n_t}\mats{\mathcal{H}}\mats{\mathcal{H}}^\dag \Big) < Nr \log\SNR \Big\},
\end{equation*}
where $\mats{\mathcal{H}} = \diag\{\mat{H}_{n}\}_{n=0}^{N-1}$ is the
block diagonal channel matrix. 

\subsubsection{Equivalent outage expression}
In order to generalize the geometrical interpretation in~\cite{Cor-FS-Jour} to the block fading channel, we start first by
finding in Lemma \ref{lemma:Eq-DMT} an equivalent expression of the outage
probability.  
\begin{lemma}
For the block fading channel, the outage probability is equivalent to,
\begin{equation}
\text{P}_{\text{out}}(r)  
= \mathbb{P}\Big\{\log\det\big(\mathbf{I} +
\frac{\text{SNR}}{Nn_t}\mathbf{C}_{\text{H}}^{\;}\mathbf{C}_{\text{H}}^\dag
\big) < Nr \log\text{SNR} \Big\},
\label{eq:out-def}
\end{equation}
where
\begin{equation}
\mathbf{C}_{\text{H}}= \left[ \begin{array}{cccc}
\mathbf{H}_{w,0} & \mathbf{H}_{w,1}  & \ldots &  \mathbf{H}_{w, N-1} \\
%\mathbf{H}_{w,N-1}& \mathbf{H}_{w,0}& \ldots &  \mathbf{H}_{w, N-2} \\
 & & \vdots  & \\
\mathbf{H}_{w,1}& \mathbf{H}_{w,2}& \ldots &  \mathbf{H}_{w, 0} 
\end{array}
\right]
\label{eq:Ch}
\end{equation}
and $ \mathbf{H}_{w,i}$, $i = 0 \ldots N-1$ are Gaussian matrices with
i.i.d. entries.
\label{lemma:Eq-DMT}
\end{lemma}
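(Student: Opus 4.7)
The plan is to exploit the block-circulant structure of $\mat{C}_{\text{H}}$ together with the invariance of the i.i.d.\ complex Gaussian law under the DFT. Reading (\ref{eq:Ch}) block by block, every row of $\mat{C}_{\text{H}}$ is a cyclic shift of $(\mat{H}_{w,0},\ldots,\mat{H}_{w,N-1})$, so $\mat{C}_{\text{H}}$ is a block-circulant matrix with $n_r\times n_t$ blocks. The standard block-circulant diagonalization theorem then produces unitary factors $\mat{U}=\mat{F}_N\otimes\id_{n_r}$ and $\mat{V}=\mat{F}_N\otimes\id_{n_t}$, with $\mat{F}_N$ the unitary DFT matrix, such that (up to the specific cyclic convention of (\ref{eq:Ch}))
\begin{equation*}
\mat{U}^\dag\,\mat{C}_{\text{H}}\,\mat{V} \;=\; \diag\{\tilde{\mat{H}}_n\}_{n=0}^{N-1}, \qquad \tilde{\mat{H}}_n \;=\; \sum_{k=0}^{N-1}\mat{H}_{w,k}\,e^{-j2\pi nk/N}.
\end{equation*}

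Next, I would invoke the rotational invariance of the circularly symmetric complex Gaussian law. Since $[e^{-j2\pi nk/N}/\sqrt{N}]_{n,k}$ is unitary and the blocks $\mat{H}_{w,k}$ are independent with i.i.d.\ $\mathcal{CN}(0,1)$ entries, the rescaled blocks $\mat{H}'_n \triangleq \tilde{\mat{H}}_n/\sqrt{N}$ are mutually independent with i.i.d.\ $\mathcal{CN}(0,1)$ entries. In particular, the tuple $(\mat{H}'_0,\ldots,\mat{H}'_{N-1})$ has the same joint law as the independent parallel sub-channels $(\mat{H}_0,\ldots,\mat{H}_{N-1})$ that form $\mats{\mathcal{H}}=\diag\{\mat{H}_n\}_{n=0}^{N-1}$ in the original outage definition.

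Combining these two steps with the unitary invariance of $\det(\id+\cdot)$ yields, almost surely,
\begin{equation*}
\det\!\Big(\id + \tfrac{\SNR}{Nn_t}\,\mat{C}_{\text{H}}\mat{C}_{\text{H}}^\dag\Big) \;=\; \prod_{n=0}^{N-1}\det\!\Big(\id + \tfrac{\SNR}{Nn_t}\,\tilde{\mat{H}}_n\tilde{\mat{H}}_n^\dag\Big) \;=\; \prod_{n=0}^{N-1}\det\!\Big(\id + \tfrac{\SNR}{n_t}\,\mat{H}'_n(\mat{H}'_n)^\dag\Big),
\end{equation*}
whose distribution coincides with that of $\det(\id+\tfrac{\SNR}{n_t}\mats{\mathcal{H}}\mats{\mathcal{H}}^\dag)$. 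The two outage events in the lemma are therefore equiprobable, establishing (\ref{eq:out-def}).

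The only delicate point I foresee is bookkeeping: one must pin down the precise cyclic-shift convention of (\ref{eq:Ch}) (row- versus column-circulant) and place the DFT factors on the correct side, possibly up to an initial block permutation whose presence leaves the joint distribution of the $\mat{H}'_n$ unchanged. Once this ordering is sorted out, the rest is a direct application of block-circulant diagonalization and Gaussian rotational invariance.
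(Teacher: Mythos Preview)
Your proposal is correct and follows essentially the same approach as the paper: both arguments rest on the block-circulant diagonalization by the DFT (what the paper phrases as ``using an FFT precoder and an FFT equalizer as in an OFDM system'') together with the unitary invariance of the i.i.d.\ $\mathcal{CN}(0,1)$ law. The only cosmetic difference is direction: the paper starts from $\mats{\mathcal{H}}$, replaces it in distribution by the DFT-block-diagonal matrix $\mat{D}_{\text{H}}$, and then shows $\mat{D}_{\text{H}}\mat{D}_{\text{H}}^\dag$ is unitarily equivalent to $\mat{C}_{\text{H}}\mat{C}_{\text{H}}^\dag$, whereas you start from $\mat{C}_{\text{H}}$, diagonalize it, and then match the resulting blocks in distribution to the $\mat{H}_n$.
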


Before going to the rigorous proof, we note here that the main intuition behind
this lemma is the fact that the block fading channel can be considered as a
selective fading channel with a channel memory of $N$ blocks. This 
is so far the case as the covariance matrix is equal to identity,
which is a full rank matrix with rank equal to $N$.

\begin{proof}
To prove this lemma, we consider $\vect{h}_{ij}$ the $N \times 1$ Gaussian vector $\sim
\mathcal{CN}(0,\id_N)$ containing the $N$ independent channel
realisations between transmit antenna $j$ and receive antenna $i$.  It
is well-known that the Gaussian vector $\vect{h}_{ij}$ is identically distributed
as $\mat{F}\vect{h}_{\omega,ij}$ for any unitary matrix $\mat{F}$, \textit{i.e.},
$
\vect{h}_{ij} \sim \mat{F}\vect{h}_{\omega,ij}, \forall i,j.
$

In the following, we specify our result to the case where $F$ is a $N
\times N$ Fast Fourier Transform (FFT) matrix.  This means that each channel realisation is
identically distributed as,
$$
h_{ij}^{[n]} \sim \frac{1}{\sqrt{N}}\sum_{l=0}^{N-1} h_{ij,w}^{[l
]} e^{-j 2\pi \frac{ln}{N}}, \quad n = 0 \ldots N-1.
$$
The block diagonal matrix $\mats{\mathcal{H}}$ is therefore
identically distributed as $\mat{D}_{\text{H}}$, \textit{i.e.}, % in
% function of the channel matrices $\mat{H}_{\omega,l} =
% (h_{ij,\omega}^{[l]})_{ 1 \leq i  \leq n_r, 1 \leq j \leq n_t}$, such that
$
\mats{\mathcal{H}}\sim \mat{D}_{\text{H}},
$
where,
\begin{equation}
% \small
\mat{D}_{\text{H}}= \frac{1}{\sqrt{N}} \left[ \begin{array}{ccc}
\displaystyle\sum_{l=0}^{N-1}\mat{H}_{w,l} \omega_l^{0}  &  &  \\
&\ddots& \\
 & & \displaystyle \sum_{l=0}^{N-1}\mat{H}_{w,l} \omega_l^{N-1}
\end{array}
\right],
\label{eq:Dh}
\end{equation}
\normalsize
with $\omega_l = e^{-j\frac{2\pi l}{N}}$ and $\mat{H}_{\omega,l} =
(h_{ij,\omega}^{[l]})_{ 1 \leq i  \leq n_r, 1 \leq j \leq n_t}$. 

Consequently, the mutual information is identically distributed~as, 
$$
I(\vect{x},\vect{y}|\mat{H}) \sim \log \det \Big(\id  +
  \frac{\SNR}{Nn_t}\mat{D}_{\text{H}}\mat{D}_{\text{H}}^\dag \Big) = I_D(\SNR).
$$
By using an FFT precoder and an FFT equalizer as in an OFDM system
to transmit over the channel $\mat{D}_{\text{H}}$ in (\ref{eq:Dh}), the matrix
$\mat{D}_{\text{H}}\mat{D}{\text{H}}^{\dag}$ can be made unitarily
equivalent to $\mat{C}_{\text{H}}\mat{C}_{\text{H}}^\dag$, where
$$
\mat{C}_{\text{H}}= \left[ \begin{array}{cccc}
\mat{H}_{w,0} & \mat{H}_{w,1}  & \ldots &  \mat{H}_{w, N-1} \\
\mat{H}_{w,N-1}& \mat{H}_{w,0}& \ldots &  \mat{H}_{w, N-2} \\
 & & \vdots  & \\
\mat{H}_{w,1}& \mat{H}_{w,2}& \ldots &  \mat{H}_{w, 0} 
\end{array}
\right].
$$
Thus, the corresponding mutual information $I_D(\SNR)$ can be written as, 
$$
 I_D(\SNR) = \log \det \left(\id  +
  \frac{\SNR}{Nn_t}\mat{C}_{\text{H}}\mat{C}_{\text{H}}^\dag \right) \sim I(\vect{x},\vect{y}|\mat{H}).
$$
It follows therefore that the outage probability is such that, 
$$
\P_{\text{out}}(r) =  \Prob\Big\{\log\det\Big(\id +   \frac{\SNR}{Nn_t}\mat{C}_{\text{H}}\mat{C}_{\text{H}}^\dag \Big) < Nr \log\SNR \Big\}.
$$
\end{proof}

\subsubsection{Geometrical interpretation}
Following the geometrical interpretation of the flat fading channel in \cite{ZhengTse-2003}, the typical outage event occurs when the channel matrix $\mat{C}_{\text{H}}$ is close to the manifold of all matrices with rank $Nr$ denoted by $\mathcal{R}_{Nr}$, such that, 
$$
\mathcal{R}_{Nr} = \{\mat{C}_{\text{H}}: \rank\{\mat{C}_{\text{H}}\} = Nr\}. 
$$
By following the same reasoning as in \cite{ZhengTse-2003}, this requires that the $d(r)$ components of $\mat{C}_{\text{H}}$ orthogonal to $\mathcal{R}_{Nr}$ to be collapsed, \textit{i.e.}, be on the order of $\SNR^{-1}$. The probability of this event is $P_{\out}(r) \doteq \SNR^{-d(r)}$. The number of these components is given by 
$$
d(r) = NMm - \dim (\mathcal{R}_{Nr}),
$$
where $\dim (\mathcal{R}_{Nr})$ is the sufficient \emph{minimal} number of parameters
required to specify matrix $\mat{C}_{\text{H}}$ with rank $Nr$.

\subsubsection{Dimensionality of $\mathcal{R}_{Nr}$}
We first note that due to the structure of $\mat{C}_{\text{H}}$ in
(\ref{eq:Ch}), the number of parameters required to characterize a
matrix $\mat{C}_{\text{H}}$ in $\mathcal{R}_{Nr}$ is equal to the
number of parameters required to specify an $m \times NM$
matrix ($m = \min(n_t,n_r)$ and $M = \max(n_t,n_r)$) with rank $r$ that contains the $n_t$ first columns if $n_t
\leq n_r$, and the $n_r$ first rows if $n_r \leq n_t$ as shown in
Figures \ref{fig:dim2} and \ref{fig:dim1} .
 
\begin{figure*}[htbp]
\centering
\centerline{\subfigure[Case 1: $n_r \leq n_t$]{\includegraphics[width= 0.45\linewidth]{./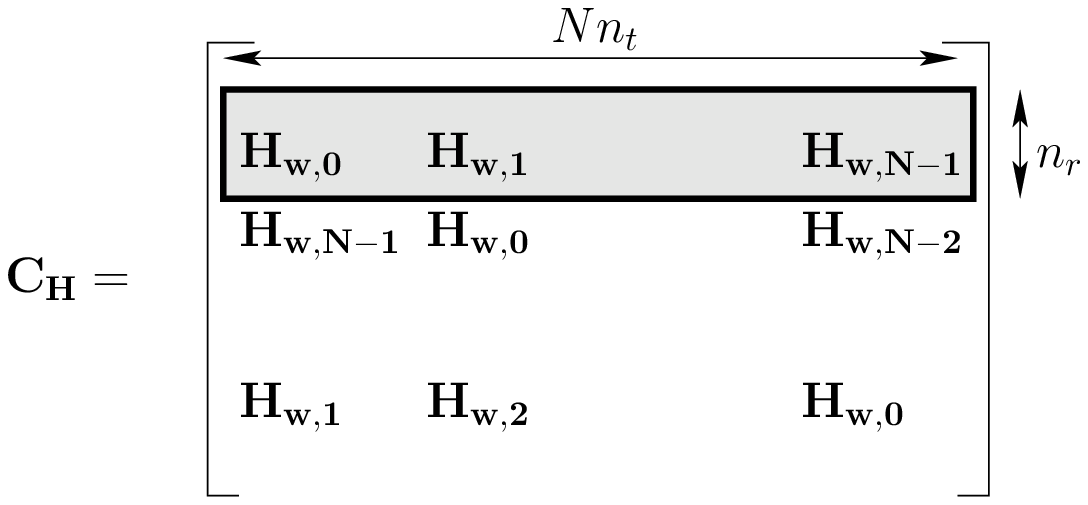}
\label{fig:dim2}}
\hfill
\subfigure[Case 2: $n_t \leq
n_r$]{\includegraphics[width=0.4\linewidth]{./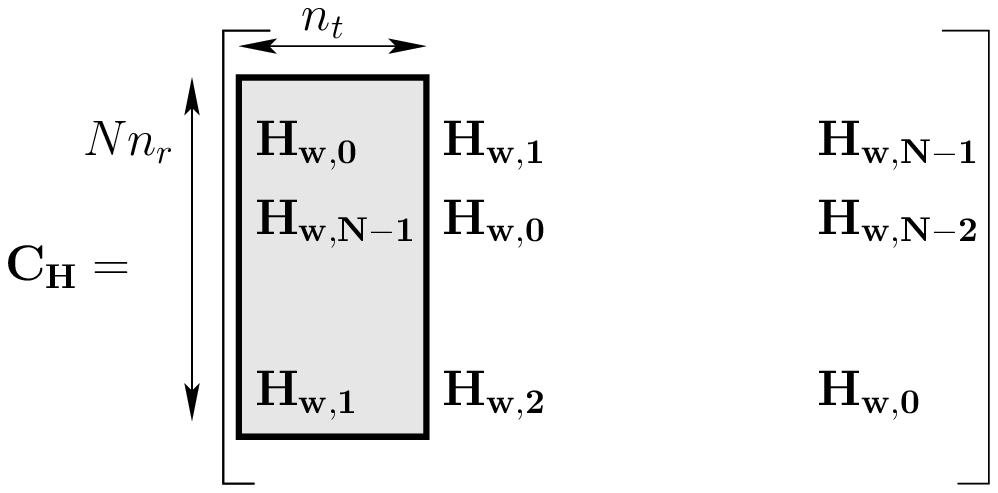}
\label{fig:dim1}}}
\caption{It is sufficient to specify a $m \times
  NM$ matrix with rank $r$, with  $m = \min(n_t,n_r)$ , $M  =
  \max(n_t,n_r)$ to characterize a matrix $\mat{C}_{\mathbb{H}}$ with
  rank $Nr$.}
\label{fig:dim}
\end{figure*}

Characterizing a matrix $\mat{C}_{\text{H}}$ with rank $Nr$ reduces
therefore to the problem of characterizing a
  matrix of dimension $m \times NM$ with rank $r$ that requires only
  $NMr + (m-r)r$, \textit{i.e},
$$
 \dim(\mathcal{R}_{Nr}) = NMr + (m-r)r,
$$
 where $MNr$ is the number of independent parameters
  needed to identify $r$ independents vectors and $(m-r)r$ parameters
  are needed to identify the linear dependent vectors as a function of
  the $r$ independent vectors.  
It can be be easily verified here that the $MNr$ free i.i.d. Gaussian parameters
that identify the~$r$ linear independent 
vectors generate a block circulant matrix with rank $Nr$ with a
probability equal to one.  

It can be deduced that the optimal DMT for the class of block fading channel is,
$$
d_{out}(r) = NMm - \dim(\mathcal{R}_{Nr}) = (NM -r)(m-r).
$$

\subsection{Comments on related work's derivation}
It turns out fron the geometrical interpretation that the outage event
is reduced to the probability that the $m \times NM$ Jensen channel, denoted by $\Jensen{H}_w$ in the rest of the paper, is
in outage, which is the Jensen outage event in the Coronel and
B\"olcskei terminology \cite{Coronel-FS}. 
This means that the outage event is reduced to, 
$$
\mathcal{O}(\SNR) = \{\Jensen{H}_w \in \mathbb{C}^{m\times NM}~\text{is in outage}\}
$$
Note that the straightforward generalization of the flat fading outage
results to the block diagonal matrix in (\ref{eq:out-def}) as in \cite{ZhengTse-2003} and
\cite{francis1} does not take into account the impact of the coding
among the channel blocks in the analytical outage derivation and does
not lead to an accurate outage probability expression. In the
following, we show how this optimal DMT can be achieved using a code
derived from cyclic division algebra (CDA).

\section{DMT achievability: Split NVD parallel codes for selective fading channel}
\label{sec:NVD-codes} 
In this section, we propose a new family of split NVD parallel codes
to achieve the optimal DMT of $(\rho M-r)(m-r)$. We start first by
deriving in Subsection \ref{sec:suf-cond} a sufficient condition on the code to achieve
the optimal DMT for this class of channel. The new family of proposed
codes is based on the previously known NVD parallel codes family which
we will briefly review in Subsections \ref{sec:nvd-codes1} and \ref{sec:nvd-codes2}. Finally, the code construction and the optimality of the split NVD parallel code is addressed in Subection \ref{sec:splitted-code}. %An explicit construction of these NVD parallel codes from cyclic division algebra (CDA) has been addressed in Yang \textit{et al.} in \cite{Sheng-Par-Chan} and Lu in \cite{francis2}. 

\subsection{Optimal code design criterion} 
\label{sec:suf-cond}
Unlike the case of time-frequency selective channel in \cite{Cor-FS-Jour}, we show here that when the channel is selective either in time or in frequency, there is no need to construct an additional precoder adapted to the channel statistics in order to achieve the optimal DMT. 
The optimal code design criterion required to achieve the optimal DMT is summarized in the following theorem. 
\begin{theorem}[Sufficient condition for DMT achievability] 
A coding scheme $\mat{X} \in \mathcal{X}_p(\SNR)$ achieves the optimal DMT $(\rho M-r)(m-r)$, if for any two different codewords $\mat{X},\hat{\mat{X}} \in \mathcal{X}_p(\SNR)$, the eigenvalues of the block diagonal matrix $\mat{DD}^\dag$, where
$
\mat{D} = \diag\big\{(\mat{X}_n - \hat{\mat{X}}_n)\big\}_{n=0}^{N-1}
$  
satisfy
\begin{equation}
\min_{\mat{X},\hat{\mat{X}} \in \mathcal{X}_p(\SNR)} \; \prod_{i=1}^{m} \lambda_{i}(\mat{D}\mat{D}^\dag) ~\dot{\geq}~ \frac{1}{2^{R(\SNR)+o(\SNR)}}.
\label{eq:nvd-o}
\end{equation}
\label{theo:suf-cond}
\end{theorem}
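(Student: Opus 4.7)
The plan is to follow the Zheng-Tse DMT achievability framework, adapted to the block-diagonal structure of the selective fading channel. I would start from the standard decomposition
$$P_{e,\mathcal{X}_p}(r) \leq P_{\out}(r) + \Prob\big(\text{error},\, \mats{\mathcal{H}}\notin\mathcal{O}(\SNR)\big),$$
and invoke Section~\ref{sec:pout}, which already gives $P_{\out}(r)\doteq \SNR^{-(\rho M-r)(m-r)}$. The remaining task is therefore to show that the conditional error term decays at least as fast.

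Starting from the Chernoff bound on the ML pairwise error probability,
$$P(\mat{X}\to\hat{\mat{X}}\,|\,\mats{\mathcal{H}}) \;\dot{\leq}\; \det\!\left(\id+\tfrac{\SNR}{n_t}\mats{\mathcal{H}}\mat{D}\mat{D}^\dag\mats{\mathcal{H}}^\dag\right)^{-1},$$
and using the block-diagonal factorization $\mats{\mathcal{H}}\mat{D}=\diag\{\mat{H}_n(\mat{X}_n-\hat{\mat{X}}_n)\}$ together with the equivalence in Lemma~\ref{lemma:Eq-DMT}, the effective channel on the non-outage event reduces to the $m\times \rho M$ Jensen matrix $\Jensen{H}_w$. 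After the usual change of variables $\lambda_i(\Jensen{H}_w\Jensen{H}_w^\dag)=\SNR^{-\alpha_i}$, the determinant admits a lower bound of the form
$$\det\!\left(\id+\tfrac{\SNR}{n_t}\mats{\mathcal{H}}\mat{D}\mat{D}^\dag\mats{\mathcal{H}}^\dag\right) \;\dot{\geq}\; \SNR^{E(\alpha)}\cdot\prod_{i=1}^m\lambda_i(\mat{D}\mat{D}^\dag),$$
where $E(\alpha)$ is the Jensen-channel exponent familiar from \cite{ZhengTse-2003,Coronel-FS}. The NVD hypothesis (\ref{eq:nvd-o}) then supplies $\prod_{i=1}^m\lambda_i(\mat{D}\mat{D}^\dag)\,\dot{\geq}\,\SNR^{-r}$, directly controlling the remaining codeword-dependent factor.

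Combining these ingredients with the union bound over $|\mathcal{X}_p(\SNR)|\doteq\SNR^{Nr}$ codewords and averaging over $\Jensen{H}_w$ restricted to the non-outage manifold (using the joint Wishart eigenvalue density as in \cite{ZhengTse-2003}), the cardinality factor $\SNR^{Nr}$ is absorbed by the NVD lower bound, and the remaining Laplace-type integral is dominated by the same Jensen outage event identified in Section~\ref{sec:pout}. This yields $\Prob(\text{error},\,\text{non-outage})\dot{\leq}\SNR^{-(\rho M-r)(m-r)}$, matching the outage exponent and establishing the claimed DMT.

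The main obstacle is showing rigorously that controlling only the $m$ smallest eigenvalues of the $Nm$-dimensional matrix $\mat{D}\mat{D}^\dag$ suffices, even though $\mats{\mathcal{H}}$ carries up to $Nm$ non-zero eigenvalues. This reflects the geometric insight from Section~\ref{sec:pout}: the typical non-outage event collapses onto an $(m-r)$-dimensional subspace of the $m\times \rho M$ Jensen channel, so only those codeword-difference energy components lying in this subspace need to be controlled, and this is exactly what the stated NVD criterion achieves without any additional precoding.
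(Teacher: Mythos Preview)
Your high-level skeleton (Chernoff bound on the PEP, union bound, reduction to the $m\times\rho M$ Jensen channel, matching the error region with the Jensen outage region) is the same as the paper's, but the central technical step is not correctly supplied. You invoke Lemma~\ref{lemma:Eq-DMT} to pass from $\mats{\mathcal H}\mat{D}$ to $\Jensen{H}_w$; this does not work, for two reasons. First, Lemma~\ref{lemma:Eq-DMT} concerns only the block-fading case $\mat{R}_{\mathbb H}=\id_N$ with $\rho=N$, whereas Theorem~\ref{theo:suf-cond} is stated for general correlation rank $\rho$. Second, Lemma~\ref{lemma:Eq-DMT} is a \emph{distributional} identity for the mutual information (the outage side); it gives no pointwise control of $\mats{\mathcal H}\mat{D}$, which is what the PEP requires. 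The paper instead uses the model $\mat{H}=\mat{H}_w(\mat{R}_{\mathbb H}^{1/2}\otimes\id_{n_t})$ to write the PEP exponent as $\Tr(\mat{H}_w\mats{\Theta}\mat{H}_w^\dag)$ with the \emph{effective codeword} $\mats{\Theta}=(\mat{R}_{\mathbb H}^{1/2}\otimes\id_{n_t})\,\mat{DD}^\dag\,(\mat{R}_{\mathbb H}^{1/2}\otimes\id_{n_t})$; the rank-$\rho$ structure enters here, since $\mats{\Theta}$ has rank exactly $\rho n_t$, and this is what makes the $m\times\rho M$ Jensen channel appear without any appeal to Lemma~\ref{lemma:Eq-DMT}.

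The second gap is the one you yourself flag as ``the main obstacle'': why do only the $m$ smallest eigenvalues of $\mat{DD}^\dag$ matter, out of $Nn_t$? Your determinant-form bound $\det(\id+\tfrac{\SNR}{n_t}\mats{\mathcal H}\mat{DD}^\dag\mats{\mathcal H}^\dag)$ involves all $Nm$ eigenvalues and does not, as written, separate into the claimed product $\SNR^{E(\alpha)}\prod_{i=1}^m\lambda_i(\mat{DD}^\dag)$; the geometric insight from Section~\ref{sec:pout} explains the outage side but does not control the codeword side. In the paper this reduction is done in two concrete steps: a dedicated eigenvalue lemma (Lemma~\ref{lemma:eig-lower}, via Ostrowski's theorem) giving $\lambda_i(\mats{\Theta})\geq \sigma_{\mathbb H}^2\,\lambda_i(\mat{DD}^\dag)$, followed by the trace inequality (imported from \cite{Cor-FS-Jour}) $\Tr(\bar{\mat H}_w\bar{\mats\Lambda}\bar{\mat H}_w^\dag)\geq\sum_{i=1}^m\lambda_i(\Jensen{H}_w\Jensen{H}_w^\dag)\,\lambda_{m-i+1}(\mat{DD}^\dag)$. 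It is precisely this pairing of the $m$ Jensen-channel eigenvalues with the $m$ smallest codeword eigenvalues that makes condition~(\ref{eq:nvd-o}) sufficient; without an analogue of Lemma~\ref{lemma:eig-lower} and this trace bound, your argument cannot close.
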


\begin{proof}
The proof of this theorem uses the same steps as the proof of [Theorem 1 in \cite{Cor-FS-Jour}] and is detailed in Appendix~\ref{ap:proof-theo2}.

\end{proof}

\subsection{NVD parallel scheme}
\label{sec:nvd-codes1}
Let $\mat{X} = \diag\{\mat{X}_n\}_{n = 0}^{N-1} \in \mathcal{X}_p(\SNR)$ be the block diagonal matrix containing the transmitted codeword $\mat{X}_i$ in (\ref{eq:model}), and constructed such that 
$
\mat{X} = \theta \; \mats{\Xi},
$
where $\theta$ is a scaling factor that depends on the structure of the code, and chosen to ensure the power constraint in (\ref{eq:pow-const}). The block diagonal matrix $\mats{\Xi} = \diag\{\tilde{\mats{\Xi}}_i\}_{i=0}^{N-1}$ is an NVD parallel code denoted by $ \mathcal{C}(\SNR)$, and defined as follows: 
\begin{definition}[NVD parallel scheme]
Let $\mathcal{A}(\SNR)$ be an alphabet\footnote{We assume here without restriction that the signal constellation is a QAM constellation, \textit{i.e}, $\mathcal{A}(\SNR) = \mathcal{A}_{\text{QAM}}(\SNR)$. This can be also extended to the case of HEX constellations.} that is salably dense, such that
\begin{eqnarray*}
%|\mathcal{A}(\SNR)| &\doteq& \SNR^{\frac{r}{n_t}},\\
\forall s \in \mathcal{A}(\SNR) &\Rightarrow& |s|^2~\dot{\leq}~|\mathcal{A}(\SNR)|.
\end{eqnarray*}
Then, $\mathcal{C}(\SNR)$ is called NVD parallel code if,
\begin{enumerate}
\item Each entry of $\mats{\Xi}$ is a linear combination of symbols carved from $\mathcal{A}(\SNR)$. 
\item The total number of transmitted symbols carved from $\mathcal{A}(\SNR)$ is equal to $TNn_t$. 
\item For any pair of different codewords $\mats{\Xi}$ and $\hat{\mats{\Xi}} \in \mathcal{C}(\SNR)$, the NVD property is satisfied
\begin{equation}
\det\big((\mats{\Xi}-\hat{\mats{\Xi}})(\mat{\Xi}-\hat{\mat{\Xi}})^\dag\big) ~\geq~ \kappa > 0, 
\label{eq:NVD-CDA}
\end{equation}
with $\kappa$ is a constant independent of $\SNR$.
\end{enumerate}
\label{def:nvd-codes}
\end{definition}

\subsection{Cyclic division algebra (CDA) code structure}
\label{sec:nvd-codes2}
We recall here the most relevant concepts of the construction of the codeword  matrix $\mats{\Xi} = \diag\{\tilde{\mats{\Xi}}_i\}_{i=0}^{N-1}$ based on cyclic division algebra. We refer the reader to \cite{Sheng-Par-Chan},\cite{francis2} for more details on the NVD parallel code construction. 
In the following, we consider,
\begin{itemize}
\item[-] The field $\mathbb F$ as a Galois extension of degree $N$ over $\mathbb{Q}(i)$, and that have $\tau$ as generator, such that
$$
\Gal(\mathbb{F}/\mathbb{Q}(i)) = \{\tau_0,\ldots,\tau_{N-1}\}.
$$
\item[-] The field $\mathbb{K}$ is a cyclic extension of degree $n_t$ over $\mathbb F$, and that have $\sigma$ as generator, such that
$$
\Gal(\mathbb K/\mathbb F) = \{\sigma^0,\ldots,\sigma^{n_t-1}\}.
$$
\end{itemize}
The code $\mats{\Xi}$ is constructed by setting $\tilde{\mats{\Xi}}_i = \tau_i(\mats{\tilde{\Xi}})$, \textit{i.e.},
\begin{equation}
\mats{\Xi} = \left[
\begin{array}{cccc}
\mats{\tilde{\Xi}}& & & \\
&\tau_1(\mats{\tilde{\Xi}}) & & \\
& & \cdots & \\
& & & \tau_{N-1}(\mats{\tilde{\Xi}})
\end{array}
\right]
\end{equation}
where $\mats{\tilde{\Xi}}$  belongs to the cyclic division algebra $\mathcal{C} = (\mathbb{K}/\mathbb{F}, \sigma,\gamma)$, and $\gamma \in \mathbb F$ chosen such that $\gamma,\gamma^2,\ldots,\gamma^{n_t-1}$ are not norms of an element of $\mathbb K$.
The matrix $\mats{\tilde{\Xi}}$ is defined such that
$$ 
\mats{\tilde{\Xi}} = \left(
		 \begin{array}{cccc}
		 x_0 & x_1 & \ldots & x_{n_t-1}\\
		 \gamma \sigma(x_{n_t-1}) & \sigma(x_0) & \ldots & \sigma(x_{n_t-2})\\
		 \vdots & ~ & ~ & \vdots \\
		 \gamma \sigma^{n_t-1}(x_1)& \gamma \sigma^{n_t-1}(x_2) & \ldots  & \sigma^{n_t-1}(x_0)		
		 \end{array}
		 \right),
$$
where,
$
x_i = \sum_{j=1}^{Nn_t} s_{i,j} \omega_{j}, \quad s_{i,j} \in \mathcal{A}(\SNR) \quad \text{and}~~\omega_j \in \mathbb{K}.
$
For the NVD parallel code, the determinant is such that,
\begin{align*}
\det \big(\diag\{\tilde{\mats{\Xi}}_i \}_{i=1}^N\big) &= \prod_k \tau_k(\det(\tilde{\mats{\Xi}}))  \notag \\
&= N_{\mathbb F / \mathbb{Q}(i)}(\det(\tilde{\mats{\Xi}}_i ))  \in \mathbb Z[i],
\end{align*}
and which is equal to zero if and only if all $x_i$ are zeros.
It follows that for $\mats{\Xi} \neq \bf{0}$ ,
$$
|\det(\mats{\Xi})|^2~\dot{\geq}~\SNR^0.
$$
 
We finally recall that the NVD parallel codes preserve the mutual
information as, 
$$
\vec\Big(\Big[ \mats{\tilde{\Xi}}^{[T]} \; \ldots
 \; \tau_{N-1}(\mats{\tilde{\Xi}}) ^{[T]}   \Big]^{[T]}\Big) = \mats{\Phi}\;\vect{s}
$$
where $\mats{\Phi}$ is an orthogonal matrix, such that
$\mats{\Phi\Phi}^\dag = \id_{Nn_t} $. 
It follows therefore that the mutual information between the
vectorized input vectors $\tilde{\vect{X}} = \vec(\left[
  \mat{X}_0^{[T]} \ldots \mat{X}_{N-1}^{[T]} \right])$ and the
vectorized output $\tilde{\vect{Y}} = \vec(\left[
  \mat{Y}_0^{[T]} \ldots \mat{Y}_{N-1}^{[T]} \right])$ is,
$$
I(\tilde{\vect{x}},\tilde{\vect{Y}}|\mat{H}) = \log\det\Big(\id_N + \frac{\SNR}{n_t}\mats{\mathcal{H}}\mats{\mathcal{H}}^\dag \Big),
$$ 
where $\mats{\mathcal{H}} = \diag\{\mat{H}_{n}\}_{n=0}^{N-1}$ is the block diagonal channel matrix.

\subsection{Choice of $\theta$ for NVD parallel codes}
Following the same reasoning in \cite{Elia} and \cite{francis1}, the scaling factor $\theta$ that insures the power constraint in (\ref{eq:pow-const}) is such that,
$$
\theta^2 \sum_{i = 0}^{N-1} \E[\|\tilde{\mats{\Xi}}_i\|^2_{\F}] \leq TN.
$$
Due the linearity of this code and to the use of unit transformation, each entry of  $x \in \mats{\Xi}$ is such that, 
\begin{eqnarray*}
\E[|x|^2] &=& \E[|s|^2], \quad s \in \mathcal{A}_{\text{QAM}}(\SNR), \\
                &=& \frac{2(|\mathcal{A}(\SNR)|-1)}{3}.
\end{eqnarray*}
This implies that, 
\begin{eqnarray*}
\sum_{i = 0}^{N-1} \E[\|\tilde{\mats{\Xi}}_i\|^2_{\F}] &=& TN\E[|x|^2] , \\
                          &\doteq& TN|\mathcal{A}(\SNR)|.
\end{eqnarray*}
The scaling factor $\theta$ that ensures the power constraint is therefore, 
\begin{equation}
\theta^2 ~\doteq~ |\mathcal{A}(\SNR)|^{-1}.
\label{eq:theta}
\end{equation}

Using the NVD parallel criterion in (\ref{eq:NVD-CDA}) and the value of $\theta^2$ in~(\ref{eq:theta}), the eigenvalues of the block diagonal matrix $\mat{D}=  \mat{X} - \hat{\mat{X}} = \theta (\mats{\Xi} - \hat{\mats{\Xi}} )$ for any different codewords $\mat{X},\hat{\mat{X}}$, are such that,
$$
\prod_{i=1}^{Nn_t} \lambda_{i}(\mat{D}\mat{D}^\dag) = \frac{|\det(\mats{\Xi}-\hat{\mats{\Xi}})|^2 }{|\mathcal{A}(\SNR)|^{Nn_t}}~\dot{\geq}~  \frac{1}{|\mathcal{A}(\SNR)|^{Nn_t}}. 
$$
Due to the power constraint in (\ref{eq:pow-const}), these eigenvalues necessarily satisfy  $\lambda_{i}(\mat{D}\mat{D}^\dag) ~\dot{\leq}~1$. 
Then, the NVD parallel criterion is equivalent to,
\begin{equation}
\min_{\mat{X},\hat{\mat{X}} \in \mathcal{X}_p(\SNR)} \; \prod_{i=1}^{m} \lambda_{i}(\mat{D}\mat{D}^\dag) ~\dot{\geq}~ \frac{1}{|\mathcal{A}(\SNR)|^{Nn_t}}. \label{eq:gen-nvd}
\end{equation}
It can be easily verified that the NVD parallel criteria of the NVD
parallel code depends critically on the size of the constellation. The
natural question that comes here is: What is the optimal size of
constellation that guarantees to transmit a rate $R(\SNR)$ over each
sub-channel and that meets the sufficient condition of DMT
achievability in (\ref{eq:nvd-o}).

\subsection{Split NVD parallel codes and optimality}
\label{sec:splitted-code}
The NVD parallel codes as put straightforwardly by Lu
in~\cite{francis2} and Yang \textit{et} al. in~\cite{Sheng-Par-Chan}
are sub-optimal, as the DMT achieved by these codes is only $\rho(n_t
- r)(n_r - r) < (\rho M -r)(m-r)$. The main idea of the new split code
construction is to design a coding scheme that guarantees to transmit a rate of
$R(\SNR)$ using a total power of $\SNR$ over each
sub-channel and to satisfy the NVD parallel
criterion in Theorem~\ref{theo:suf-cond}. The two possible ways of
splitting the data over the parallel channels are detailed in
Subsections \ref{sec:BD-NVD} and \ref{sec:split-NVD}. 

\subsubsection{Block diagonal NVD parallel code}
\label{sec:BD-NVD} 
The first way of splitting the data over the parallel channels has been
previously studied in \cite{francis2} and is depicted in Figure
\ref{fig:TS}.  
\begin{figure}[htbp]
\centering
\includegraphics[width = \linewidth]{./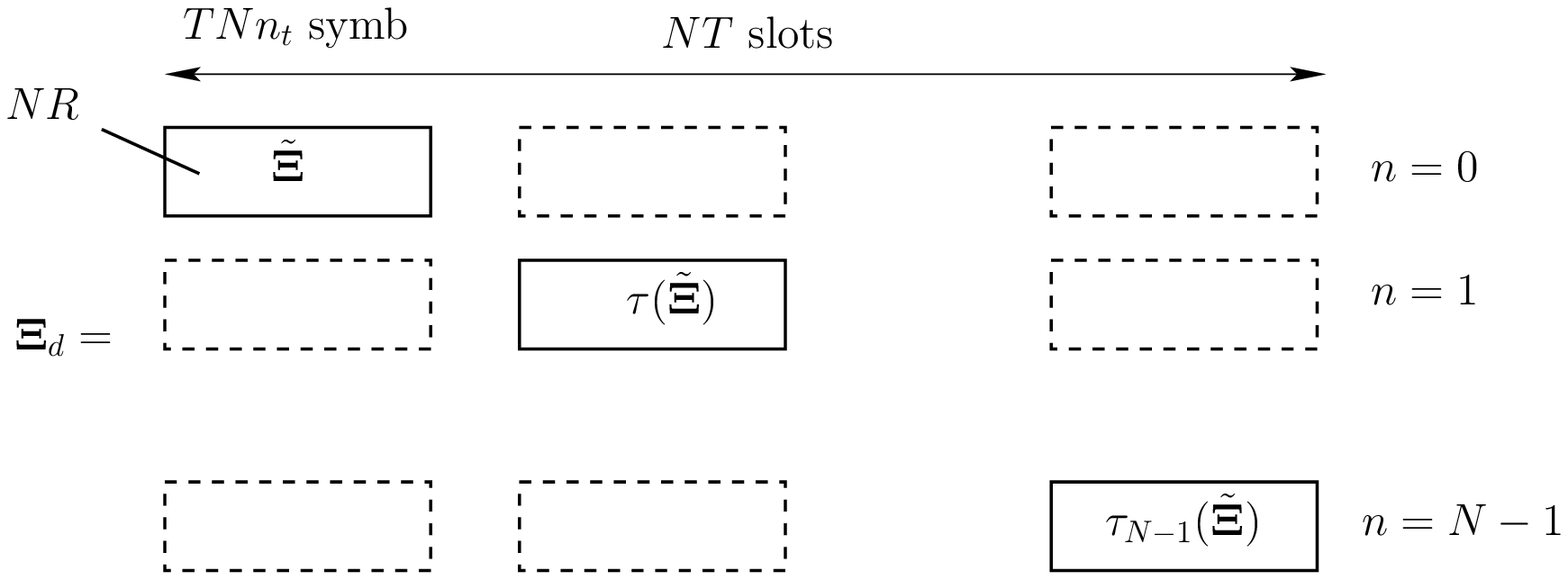}
\caption{Coding across time and frequency: The total rate is transmitted only during $T$ slots. Each entry of $\tau_i(\tilde{\mats{\Xi}})$ is a linear combination of symbols carved from $\mathcal{A}_d(\SNR)$ where $|\mathcal{A}_d(\SNR)| = \SNR^{\frac{r}{n_t}}$. In this case, $\mat{X}_{e,d} = \theta_d \mats{\Xi}_d$.}
\label{fig:TS}
\end{figure}

In this case, the total rate $NR$ is transmitted during only $T$
 slots over each sub-channel.
It can be easily verified that for this scheme the outage event is
such that, 
$$
\mathcal{O}_1 (r,\SNR) = \left\{
  I_1(\tilde{\vect{x}},\tilde{\vect{Y}}|\mat{H})  < Nr \log\SNR  \right\},
$$
where, 
$$
I_1(\tilde{\vect{x}},\tilde{\vect{Y}}|\mat{H}) = \log\det\big(\id_N + \frac{\SNR}{n_t}\mats{\mathcal{H}}\mats{\mathcal{H}}^\dag \big).
$$
\par
Each block $\tau_i(\tilde{\mats{\Xi}})$ contains $TNn_t$ symbols carved from a signal constellation $\mathcal{A}_d(\SNR)$. In order to maintain a rate of $R(\SNR)$ over each sub-channel, the size of the constellation $|\mathcal{A}_d(\SNR)|$ should be chosen such that, 
$$
R(\SNR) = r \log \SNR  = \frac{1}{NT} \log |\mathcal{A}_d(\SNR)|^{n_t T N}. 
$$
\textit{i.e.}, $|\mathcal{A}_d(\SNR)| = \SNR^{\frac{r}{n_t}}$. 	
It can easily be verified that for this choice of signal constellation size, the NVD parallel criterion in (\ref{eq:gen-nvd}) is, 
$$
\min_{\mat{X},\hat{\mat{X}} \in \mathcal{X}_p(\SNR)} \; \prod_{i=1}^{m} \lambda_{i}(\mat{D}\mat{D}^\dag)\dot{\geq}~\frac{1}{2^{NR(\SNR)+o(\SNR)}}.
$$
Obviously, the sufficient condition in Theorem~\ref{theo:suf-cond} is
not satisfied in this case. The achievable DMT by this transmission
scheme is only $\rho(n_t-r)(n_r-r)$ as shown in~\cite{francis2}, and
it is therefore sub-optimal.

\subsubsection{Split NVD parallel code}
\label{sec:split-NVD}
The second way we propose to split the data that guarantees to transmit a rate of
$R(\SNR)$ using a total power of $\SNR$ over each
sub-channel is shown in Figure
\ref{fig:split}. 
In this case, the total rate is split equally among
all the $NT$ slots.
Each block $\mats{\Xi}_i$ transmits $TNn_t$ symbols carved from a
signal constellation $\mathcal{A}_s(\SNR)$. The same $TNn_t$ symbols
are transmitted over blocks $\mats{\Xi}_i \; \ldots \;
\tau_{N-1}(\mats{\Xi}_{i})$ but encoded differently. However,
different symbols are transmitted over two different blocks
$\mats{\Xi}_i$ and $\mats{\Xi}_j$.
\begin{figure}[htbp]
\centering
\includegraphics[width =  \linewidth]{./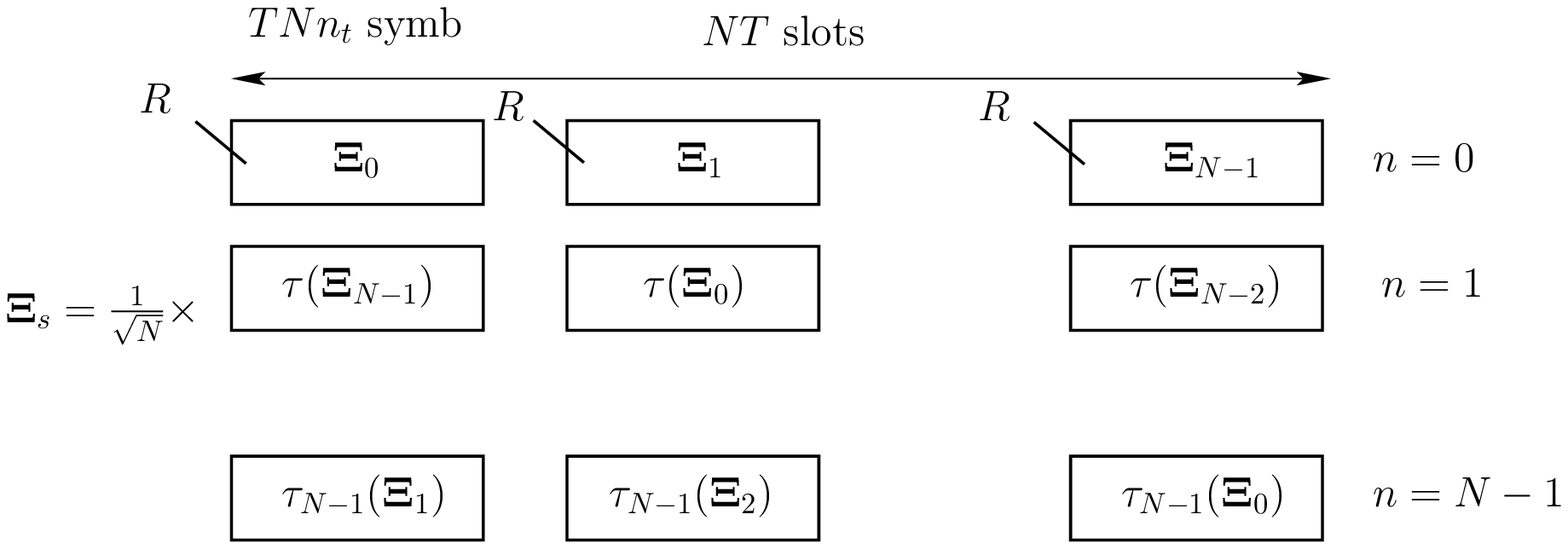}
\caption{Coding across time and frequency: The total rate is split across the $NT$ slots. Each entry of $\tau_i(\mats{\Xi}_i)$ is a linear combination of symbols carved from $\mathcal{A}_s(\SNR)$ where $|\mathcal{A}_s(\SNR)| = \SNR^{\frac{r}{Nn_t}}$. In this case, $\mat{X}_{e,s} = \theta_s \mats{\Xi}_s$.}
\label{fig:split}
\end{figure}
\par
For this transmission scheme,  the outage event occurs when at least
one of the NVD parallel code scheme with rate $R(\SNR) = r\log\SNR$ is in outage, meaning that, 
$$
\mathcal{O}_2(r,\SNR) = \bigcup_{s = 0}^{N-1} \mathcal{O}_s(r,\SNR),
$$
where, 
$$
\mathcal{O}_s (r,\SNR) = \left\{ \frac{1}{N}
I_2(\tilde{\vect{x}},\tilde{\vect{Y}}|\mat{H})  < r \log\SNR
\right\}, \quad \forall s,
$$
and,
$$
I_2(\tilde{\vect{x}},\tilde{\vect{Y}}|\mat{H}) = \log\det\big(\id_N +
\frac{\SNR}{Nn_t}\mats{\mathcal{H}}\mats{\mathcal{H}}^\dag \big).
$$
Note that the normalization factor $1/N$ in the first side of the
inequality in the outage event $\mathcal{O}_s(r,\SNR)$ traduces the fact that $N$ blocks are needed to decode the
information of each NVD parallel code with rate $R(\SNR)$. 
\\* 
Using the union bound and the inclusion bound ($\mathcal{O}_s \subseteq \mathcal{O}_2$), the outage probability can be bounded as,
\begin{equation}
\P(\mathcal{O}_s)  \leq \P(\mathcal{O}_2) \leq \sum_{i = 0}^{N-1}
\P(\mathcal{O}_s) 
\label{eq:bound-out}
\end{equation}
Assuming that $\P(\mathcal{O}_s)$ scales as $\SNR^{-d_s(r)}$, it
follows from~(\ref{eq:bound-out}) that at high SNR,
$$
 \P(\mathcal{O}_2)\doteq \SNR^{-d_s(r)} \doteq \P(\mathcal{O}_s) \doteq \P(\mathcal{O}_1),
$$
This implies that this scheme is equivalent in term
of outage to the first scheme.
\par
In order to maintain the rate of $R(\SNR)$ over each sub-channel, the signal constellation $\mathcal{A}_s(\SNR)$ should be chosen such that,
$$
R(\SNR) = r \log \SNR = \frac{1}{T} \log |\mathcal{A}_s(\SNR)|^{n_t TN}. 
$$
The size of the signal constellation for the split NVD parallel scheme is therefore reduced compared to the block diagonal case, and 
$$
|\mathcal{A}_s(\SNR)| = \SNR^{\frac{r}{Nn_t}} =
|\mathcal{A}_d(\SNR)|^{\frac{1}{N}}.
$$
Due to the block diagonal channel matrix structure, it can be deduced
that the split NVD parallel code is equivalent to a concatenation of
$N$ independent parallel NVD codes, where the symbols of each NVD
parallel code are carved from a constellation $\mathcal{A}_s(\SNR)$
with size $\SNR^{\frac{r}{Nn_t}}$.
The system is in error if at least one of the NVD parallel codes is in
error, \textit{i.e.},
$$
\varepsilon(r,\SNR) = \bigcup_{i = 0}^{N-1} \varepsilon_i(r,\SNR),
$$
where $\varepsilon(r,\SNR)$ represents the event that the system is in
error and $\varepsilon_i(r,\SNR)$ denotes the event that the $\text{i}^{\text{th}}$ NVD
parallel code formed by the blocks $\mats{\Xi}_i \; \ldots \;
\tau_{N-1}(\mats{\Xi}_{i})$ is in error. 
For each NVD parallel code with symbols carved from  $\mathcal{A}_s(\SNR)$, it can be easily verified by replacing the cardinality of $\mathcal{A}_s(\SNR)$ in~(\ref{eq:gen-nvd}) that the NVD parallel criterion in Theorem~\ref{theo:suf-cond} is satisfied, \textit{i.e.}, 
$$
\min_{\mat{X},\hat{\mat{X}} \in \mathcal{X}_p(\SNR)} \; \prod_{i=1}^{m} \lambda_{i}(\mat{D}\mat{D}^\dag)~ \dot{\geq}\frac{1}{2^{R(\SNR)+o(\SNR)}}.
$$
It follows from Theorem~\ref{theo:suf-cond} that, 
$$
\P(\varepsilon_i) \doteq \SNR^{-d_i(r)}, 
$$
where $d_i(r) = (\rho M -r)(m-r)$, $\forall i$. \\* 
Using the inclusion and the union bound as for the outage analysis in
(\ref{eq:bound-out}), it follows that, 
$$
\P_e(r,\SNR) = \P(\varepsilon) \doteq \SNR^{-d(r)}, 
$$
with $d(r) = d_i(r) = (\rho M -r)(m-r)$.
\\*
The split NVD parallel codes in Figure \ref{fig:split} achieve
therefore the optimal DMT of $(\rho M-r)(m-r)$.

\section{Numerical results} 
In order to compare the performance of the split NVD parallel code with the classical NVD parallel code, we consider the case of $2$ parallel $2 \times 2$ MIMO channel, \textit{i.e.} a block fading channel with a total number of blocks equal to $2$. 
\newline
The structure of the NVD parallel code for this configuration is given in \cite{Sheng-Par-Chan}, such that
\begin{equation}
\mat{X} =\left(\begin{array}{cc}
\mats{\Xi} & 0 \\
0 &  \tau(\mats{\Xi})
\end{array}\right)
\end{equation}
where $\mats{\Xi}$ is given in (\ref{eq:xi}) with $\theta=\frac{1+\sqrt{5}}{2}$, $\bar{\theta}=\frac{1-\sqrt{5}}{2}$,
$\alpha = 1 +i - i\theta$, $\bar{\alpha} = 1 +i - i\bar{\theta}$ and $\zeta_8=e^{\frac{i \pi}{4}}$.
The channel matrix $\tau(\mats{\Xi}$ can be deduced from $\mats{\Xi}$
by replacing $\zeta_8$ by $-\zeta_8$.

\begin{figure*}[ht]
\begin{equation}
\mats{\Xi} =\frac{1}{\sqrt{5}}\left(\begin{array}{cc}
\alpha (s_1+s_2\zeta_8+s_3\theta+s_4\zeta_8\theta) & \alpha(s_5+s_6\zeta_8+s_7\theta+s_8\zeta_8\theta)\\
\zeta_8 \bar{\alpha}(s_5+s_6\zeta_8+s_7\bar{\theta}+s_8\zeta_8\bar{\theta}) & \bar{\alpha} (s_1+s_2\zeta_8+s_3\bar{\theta}+s_4\zeta_8\bar{\theta})\end{array}\right).
\label{eq:xi}
\end{equation}
\end{figure*}

% \begin{itemize}
% \item[-] Matrix $\mats{\Xi} $ is such that,
% $$
% \mats{\Xi} = \left(
% 		 \begin{array}{cc}
% 		 x_1 & x_2 \\
% 		 \gamma \sigma(x_{2}) & \sigma(x_1)
% 		 \end{array}
% 		 \right)
% $$
% \item[-]
% $\mathbb{F}=\mathbb{Q}(\zeta_8)$ be an extension of $\mathbb{Q}(i)$ of degree $2$, with $\zeta_8=e^{\frac{i \pi}{4}}$, then $a = s_1 + \zeta_8 s_2$.
% \item[-] 
% $x_1,x_2 \in \mathcal{O}_{\mathbb K}$, with $\mathbb K = \mathbb F(\sqrt{5})$, and $\mathcal{O}_{\mathbb{K}}= \{a+b\theta~|~a,b \in \mathcal{O}_{\mathbb{F}}\}$ and.
% \item[-]
% The Galois generators $\sigma$ and $\tau$ are chosen such that
% \begin{eqnarray*}
% \sigma(x) &=& a + b\bar{\theta}, \qquad ~~~~x = a + b \theta \\
% \tau(a) &=& s_1 - \zeta_8 s_2 , \qquad a = s_1 + \zeta_8 s_2
% \end{eqnarray*}
% \item[-] 
% $x_i$ in an ideal generated by $\alpha \mathcal{O}_{\mathbb K}$, with.
% \item[-]  
% $\gamma = \zeta_8$ is not a norm of an element of $\mathbb K$. 
% \item[-]  
% The matrix codeword is then given in (\ref{eq:xi}), such that,
\par
For the same channel model, the structure of the split NVD parallel code is such that, 
\begin{equation}
\mat{X} = \frac{1}{\sqrt{2}}\left(\begin{array}{cc}
\mats{\Xi}_1 & \mats{\Xi}_2  \\
 \tau(\mats{\Xi}_2) &  \tau(\mats{\Xi}_1)
\end{array}\right)
\end{equation}

As we showed in previous section, the optimal DMT achievable by the NVD
parallel is only $2(2-r)(2-r)$. However, the optimal DMT achievable by the split
code is $(4-r)(2-r)$.  These two DMT are depicted in Figure \ref{fig:example-DMT}.
\begin{figure}[ht]
\centering
\includegraphics[width =
0.75\linewidth]{./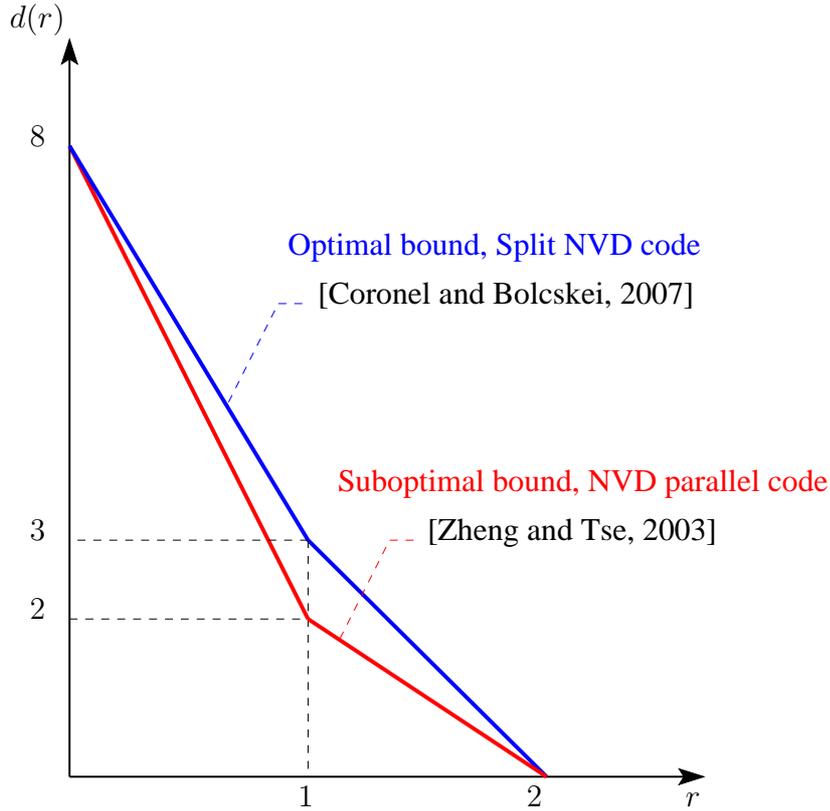}
\caption{The optimal DMT achievable by the NVD parallel code for the
  $2\times2$ block fading channel with $N = 2$ is $d(r) =
  2(2-r)(2-r)$. The split code achieves the optimal DMT of the block
  fading channel $d(r) = (4-r)(2-r)$.}
\label{fig:example-DMT}
\end{figure}

For a rate per channel use equal to $4$ bpcu (resp. $8$ bpcu), the
symbols $s_1, s_2, \ldots, s_8$ should be carved from a BPSK (resp. QPSK) constellation for the scheme with split
code and from a QPSK (resp. 16QAM) constellation for the scheme with NVD parallel
code. One should expect here that the gain provided
by the use of a smaller size of constellation used in the split NVD parallel code 
to be compensated by the normalization factor $1/\sqrt{2}$. Due to the
gain in DMT, this is not the case and the comparison of both schemes is in Figure
\ref{fig:Split_code}. 
\begin{figure}[ht]
\centering
\includegraphics[width =  0.75\linewidth,angle = 270]{./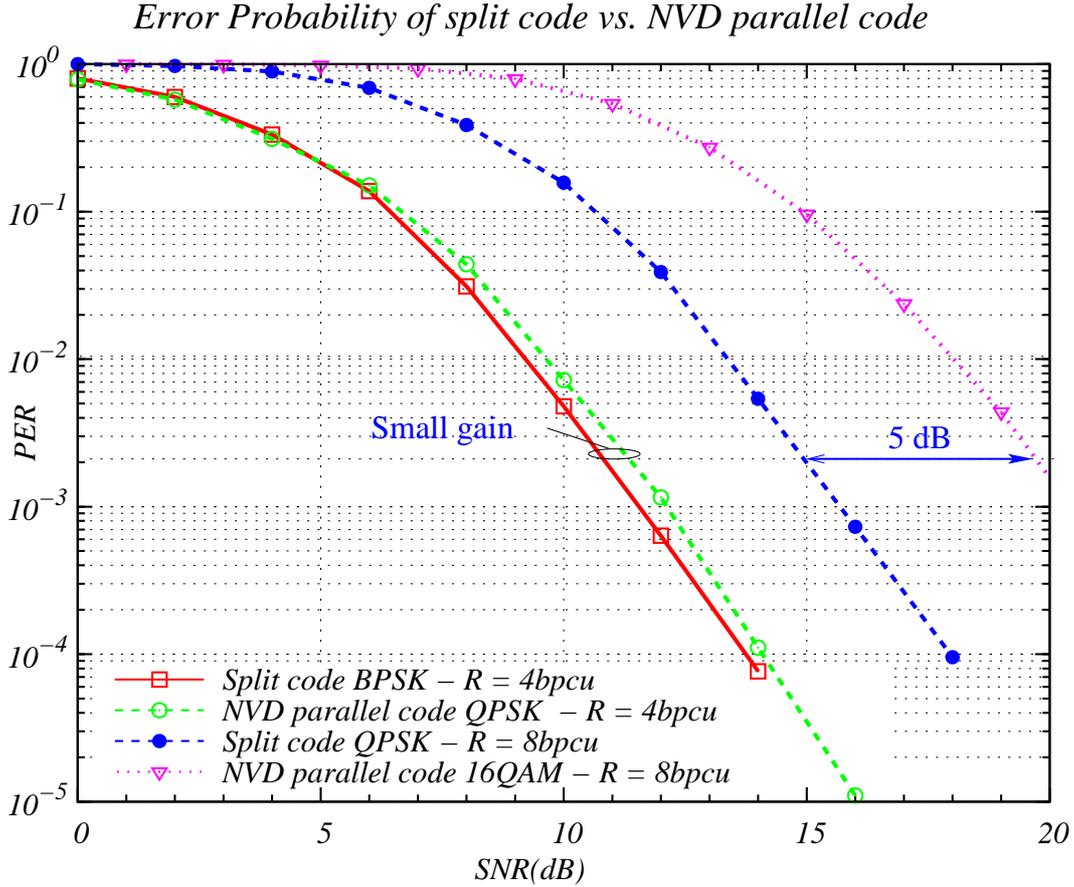}
\caption{Comparison of split NVD code versus NVD parallel code for a
  block fading MIMO channel with $N=2$ blocks and $n_t = n_r = 2$.}
\label{fig:Split_code}
\end{figure}
It can  be easily shown there the gain of the split codes
compared to the NVD parallel case is significant when the spectral
efficiency of the code increases. For a small rate of $4$ bpcu, a small gain can
be observed. However, for the rate of $8$ bpcu, approximately $5$ dB
of gain can be observed.

\section{Conclusion}
\label{sec:conclusion}
In this paper, we considered the class of selective fading MIMO
channel where the channel is selective either in time and in
frequency. Motivated by the open literature debate on the optimal
achievable DMT for the block fading channel and using completely
different arguments than~\cite{Coronel-FS} and \cite{Cor-FS-Jour},  we
proved here that the optimal DMT expression in~\cite{Coronel-FS} is
achievable for all the classes of selective fading channels, including
the block fading channel. Using the geometrical argument, we showed
that the outage bound in \cite{ZhengTse-2003} is not limiting for the outage
probability as claimed in \cite{francis1}. Moreover, a new family of split NVD
parallel codes to achieve the optimal DMT in \cite{Coronel-FS} for the
case of time or frequency selective channels is proposed. 

% \section{Acknowledgments}
% The authors would like to thank Prof. Helmut B\"olcskei for his valuable comments on the paper and for all the insightful discussions. 
% This work has been performed while the first author was with Motorola Labs (Paris), and was supported by the Association Nationale de Recherche Technique (ANRT) and by the Fond Social Europ\'een (FSE).

\appendices

\section{Proof of Theorem \ref{theo:suf-cond}}
\label{ap:proof-theo2}
Let $\mat{X}$ be the transmitted codeword, $\hat{\mat{X}}$ the nearest decoded codeword and $\Delta \mat{X}_n = \mat{X}_n - \hat{\mat{X}}_n$ the difference codeword matrix.  
The pairwise error probability of the correlated parallel channels is upper-bounded as following, 
\begin{eqnarray}
\nonumber
\PEP &\leq& \E_{\mat{H}} \exp\left(-\frac{\SNR}{4n_t} \sum_{n=0}^{N-1}\|\mat{H}_n\Delta \mat{X}_n\|^2_{\F}\right),\\
\label{eq:PEP1}
&\leq& \E_{\mat{H}} \exp\Big(-\frac{\SNR}{4n_t}\Tr\big(\mat{H}_w \mats{\Theta}\mat{H}_w^\dag \big)\Big), 
\end{eqnarray}
where $\mat{H}_w$ denotes the $n_r \times N n_t$ i.i.d. $\mathcal{CN}(0,1)$ matrix, and
$$
\mats{\Theta} = (\mat{R}_{\mathbb{H}}^{1/2} \otimes \id_{n_t})\diag\Big\{\Delta \mat{X}_n\Delta \mat{X}_n^\dag\Big\}_{n=0}^{N-1}(\mat{R}_{\mathbb{H}}^{1/2} \otimes \id_{n_t})
$$
is the effective codeword matrix. 
\\*
Assuming that $\mathcal{X}_p(\SNR)$ satisfies the NVD criteria, then $\mat{D} = \diag \big\{ \Delta \mat{X}_n\big\}_{n=0}^{N-1}$ is a full rank matrix with rank equals to $Nn_t$.  The rank and the eigenvalues of the effective codeword matrix $\mats{\Theta}$ can be computed using the following lemma~\ref{lemma:eig-lower}.
\\*
\begin{lemma}
Let $\mat{A}$ be a $p \times p$ Hermitian matrix given by, 
$$
\mat{A} = \mat{B}(\mat{CC}^\dag)\mat{B}^\dag,
$$
where $\mat{B}$ is $p\times p$ matrix with rank $s$, $\mat{C}$ is full rank $p \times p$ matrix. Then, the matrix $\mat{A}$ has the following properties:
\begin{enumerate}
\item[a)] The rank of $\mat{A}$ is equal to $s$, the rank of $\mat{B}$.
\item[b)] 
The non zero eigenvalues $\lambda_k(A)$ of $\mat{A}$ are lower bounded by, 
\begin{equation}
\lambda_k(\mat{A}) \geq \lambda_1(\mat{BB}^\dag)\lambda_k(\mat{CC}^\dag). 
\label{eq:eig-bound}
\end{equation}
\end{enumerate}
\label{lemma:eig-lower}
\end{lemma}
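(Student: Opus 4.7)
The plan is to reduce to the case of an invertible congruence by exploiting the singular value decomposition of $\mat{B}$, and then to combine two classical tools: Ostrowski's theorem for congruence transformations and the Cauchy interlacing theorem for principal submatrices of Hermitian matrices.

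For part (a), since $\mat{C}$ is full rank, $\mat{CC}^\dag$ is Hermitian positive definite and admits a Hermitian positive definite square root $\mat{P}$ with $\mat{P}\mat{P}^\dag = \mat{CC}^\dag$. Then $\mat{A} = (\mat{BP})(\mat{BP})^\dag$, and since $\mat{P}$ is invertible, $\rank(\mat{BP}) = \rank(\mat{B}) = s$; hence $\rank(\mat{A}) = s$.

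For part (b), I would write the SVD $\mat{B} = \mat{U}\mats{\Sigma}_B\mat{V}^\dag$, where $\mats{\Sigma}_B = \diag\{\sigma_1,\ldots,\sigma_s,0,\ldots,0\}$ has exactly $s$ strictly positive diagonal entries, namely the singular values of $\mat{B}$. Substituting yields $\mat{A} = \mat{U}\mats{\Sigma}_B \mat{M}\mats{\Sigma}_B^\dag \mat{U}^\dag$ with $\mat{M} = \mat{V}^\dag\mat{CC}^\dag\mat{V}$, and since unitary congruence preserves eigenvalues, the non-zero eigenvalues of $\mat{A}$ coincide with those of the $s\times s$ matrix $\tilde{\mats{\Sigma}}_B \tilde{\mat{M}} \tilde{\mats{\Sigma}}_B^\dag$, where $\tilde{\mats{\Sigma}}_B = \diag\{\sigma_1,\ldots,\sigma_s\}$ is now invertible and $\tilde{\mat{M}}$ is the leading $s\times s$ principal submatrix of the positive definite matrix $\mat{M}$. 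Ostrowski's theorem, applied to this congruence by the invertible $\tilde{\mats{\Sigma}}_B$, then gives $\lambda_k(\tilde{\mats{\Sigma}}_B \tilde{\mat{M}} \tilde{\mats{\Sigma}}_B^\dag) \geq \lambda_1(\tilde{\mats{\Sigma}}_B \tilde{\mats{\Sigma}}_B^\dag)\,\lambda_k(\tilde{\mat{M}}) = \lambda_1(\mat{BB}^\dag)\,\lambda_k(\tilde{\mat{M}})$, where the last identity uses that the non-zero eigenvalues of $\mat{BB}^\dag$ are exactly the squared singular values of $\mat{B}$. The Cauchy interlacing theorem, applied to the Hermitian positive definite $\mat{M}$ and its $s\times s$ principal submatrix $\tilde{\mat{M}}$, yields $\lambda_k(\tilde{\mat{M}}) \geq \lambda_k(\mat{M}) = \lambda_k(\mat{CC}^\dag)$, and chaining the two inequalities delivers the claimed bound.

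The main obstacle is the singularity of $\mat{B}$, which rules out a direct application of Ostrowski's theorem to the full $p\times p$ product and is precisely what makes the statement non-trivial. The SVD-plus-interlacing detour is the key device: it isolates the invertible part of $\mat{B}$ on which Ostrowski applies, while the interlacing theorem guarantees that passing from $\mat{M}$ to its $s\times s$ principal submatrix $\tilde{\mat{M}}$ only increases the $k$-th eigenvalue in ascending order, so the lower bound in terms of $\lambda_k(\mat{CC}^\dag)$ is preserved.
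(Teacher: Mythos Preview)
Your proof is correct and follows essentially the same route as the paper: both arguments reduce to the $s\times s$ invertible block via a spectral/SVD decomposition of $\mat{B}$ (your $\tilde{\mats{\Sigma}}_B$ and $\tilde{\mat{M}}$ are exactly the paper's $\tilde{\mats{\Lambda}}^{1/2}$ and $\tilde{\mats{\Omega}}$), then apply Ostrowski's theorem on that block and Cauchy interlacing to pass back to $\lambda_k(\mat{CC}^\dag)$. The only cosmetic difference is in part~(a), where you use the positive-definite square root of $\mat{CC}^\dag$ while the paper invokes the rank inequality for matrix products; both are one-line verifications.
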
 
 
\begin{proof}
%The proof of this lemma uses the same matricial tools as in \cite{Cor-FS-Jour} and is included in appendix~\ref{chap2:ap1}.
The proof of this lemma uses the same matricial tools as \cite{Cor-FS-Jour}, and is detailed in Appendix~\ref{chap2:ap1}.
%a straightforward consequence of the Ostrowski theorem in~\cite{Johnson} applied to matrix $\mat{A}$. %in is a straightforward consequence of applying the Os
\end{proof}

By applying Lemma \ref{lemma:eig-lower}-a to $\mats{\Theta}$, it follows that,
\begin{eqnarray*}
\rank\{\mats{\Theta}\} &=& \rank\{\mat{R}_{\mathbb{H}}^{1/2} \otimes \id_{n_t}\} \\
&=& \rank\{\mat{R}_{\mathbb{H}}^{1/2}\} \rank\{\id_{n_t}\} = \rho n_t.
\end{eqnarray*}
~
By noticing that $\mats{\Theta}$ is not full rank, the Frobenius norm in (\ref{eq:PEP1}) has the same distribution as $\Tr\{\bar{\mat{H}}_w \bar{\mats{\Lambda}}\bar{\mat{H}}_w^\dag\}$ where  $\bar{\mat{H}}_w$ is the $n_r \times \rho n_t$ effective channel with i.i.d. entries $\sim \mathcal{CN}(0,1)$ and $\bar{\mats{\Lambda}}$ is the $\rho n_t \times \rho n_t$ diagonal matrix containing the non-zero eigenvalues of the effective codeword $\mats{\Theta}$ bounded using Lemma \ref{lemma:eig-lower}-b such that
$$
\lambda_i(\mats{\Theta}) \geq \sigma^2_{\mathbb{H}}\;\lambda_i\big(\mat{D}\mat{D}^\dag\big), \quad i = 1 \ldots \rho n_t,
$$
where $\sigma_{\mathbb{H}}^2$ is the smallest eigenvalue of $\mat{R}_{\mathbb{H}}$.

By following the same footsteps as in [(105) and (108) in \cite{Cor-FS-Jour}], this Frobenius norm can be bounded such that, 
\begin{eqnarray*}
\Tr\{\bar{\mat{H}}_w \bar{\mats{\Lambda}}\bar{\mat{H}}_w^\dag\} &\geq& \sum_{i=1}^{m} \lambda_i(\Jensen{H}_w\Jensen{H}_w^\dag)\lambda_{m-i+1}(\mats{\Theta}) \\
&\geq& \sigma_{\mathbb{H}}^2 \sum_{i=1}^{m}\lambda_i(\Jensen{H}_w\Jensen{H}_w^\dag)\lambda_{m-i+1}(\mat{DD}^\dag)
\end{eqnarray*}
where $\Jensen{H}_w$ denotes the $m \times \rho M$ Jensen channel with i.i.d. $\mathcal{CN}(0,1)$ entries such that,
\begin{equation}
\Jensen{H}_w = 
\begin{cases}
[\mat{H}_{w,0} \; \ldots \; \mat{H}_{w,\rho -1}], \qquad \text{if}~n_r \leq n_t, \\
[\mat{H}_{w,0}^{\dag} \; \ldots \; \mat{H}_{w,\rho -1}^\dag], \qquad \text{if}~n_r > n_t.
\end{cases}
\label{eq:Jensen}
\end{equation}

The rest of the proof uses the same technique as presented in \cite{Coronel-FS},\cite{Cor-FS-Jour}. It can be deduced that if the code satisfies the NVD criteria in (\ref{eq:nvd-o}), then the  error region event $E_{\vects{\vects{\alpha}}}(r,\SNR)$ for a given channel realisation $\vects{\alpha}$ matches with the outage region $\mathcal{O}_{\vects{\alpha}}^{[m,\rho M]}(r,\SNR)$ of the equivalent $m \times \rho M$ MIMO channel,% \textit{i.e.}, 
\begin{align}
\label{eq:error-reg}
E_{\vects{\alpha}}(r,\SNR) &= \Big\{ \sum_{i=1}^{k} \alpha_i \geq k-r, \;\; k = 1, \ldots, m \Big\},\notag \\ 
&= \mathcal{O}_{\vects{\alpha}}^{[m,\rho M]}(r,\SNR),
\end{align}
with $\vects{\alpha}$ being the vector containing the eigen exponents of the channel $\Jensen{H}_w\Jensen{H}_w^\dag$, such that $\lambda_i(\Jensen{H}_w\Jensen{H}_w^\dag) \doteq \SNR^{-\alpha_i}$.

\section{Proof of Lemma \ref{lemma:eig-lower}}
\label{chap2:ap1}
As $\mat{A}$ is an Hermitian matrix, its rank is equal to the rank of $\mat{BC}$. It can be easily checked from the product matrix rank property in (\ref{eq:prod-rank}), ($\mat{D}\in \mathbb{C}^{a\times b}, \mat{E}\in \mathbb{C}^{b \times c}$), 
\begin{align}
\rank\{\mat{D}\} + \rank\{\mat{E}\}&  - b \leq \rank\{\mat{D}\mat{E}\} \notag \\ & \leq \min\big\{\rank\{\mat{D}\},\rank\{\mat{E}\}\big\},
\label{eq:prod-rank}
\end{align}
and the fact that $\mat{C}$ is a full rank matrix, that, 
$$
\rank\{\mat{B}\} + p - p\leq \rank\{\mat{A}\} \leq \rank\{\mat{B}\}, 
$$
which implies that, 
$
\rank\{\mat{A}\} = \rank\{\mat{B}\}.
$
\par
Using the fact that for a square matrix $\mat{M} \in \mathbb{C}^{a \times a}$,
$\lambda(\mat{MM}^\dag)= \lambda(\mat{M}^\dag\mat{M}),$
implies that
$$
\lambda_k(\mat{A})  = \lambda_k(\mat{C}^\dag\mat{B}^\dag\mat{BC}).
$$
Let $\mat{B}^\dag\mat{B} = \mat{U}\mats{\Lambda}\mat{U}^\dag$ be the eigenvalue decomposition of $\mat{B}^\dag\mat{B}$, with $\mats{\Lambda} = [\mats{\tilde{\Lambda}}\; \mat{0}_{p-s}]$. Then, 
\begin{subequations}
\begin{eqnarray*}
\label{eq:A1}
\lambda_k(\mat{A})  &=& \lambda_k(\mat{C}^\dag\mat{U\Lambda U}^\dag\mat{C}), \\
\label{eq:A2}
 &=& \lambda_k(\mats{\Lambda}^{1/2}\mat{U}^\dag\mat{CC}^\dag \mat{U} \mats{\Lambda}^{1/2}).
\end{eqnarray*}
%where (\ref{eq:A2}) follows from using the matrix property in (\ref{eq:mat-prop}). 
Let $\mats{\Omega} = \mat{U}^\dag(\mat{CC}^\dag) \mat{U}$ and $\mats{\tilde{\Omega}}$ be the $s \times s$ principal submatrix of $\mats{\Omega}$.  Then, 
\begin{eqnarray}
\label{eq:A3}
\lambda_k(\mat{A}) &=& \lambda_k(\mats{\Lambda}^{1/2}\mats{\Omega}\mats{\Lambda}^{1/2}), \\
\label{eq:A4}
 &=& \lambda_k(\mats{\tilde{\Lambda}}^{1/2}\mats{\tilde{\Omega}}\mats{\tilde{\Lambda}}^{1/2}),
\end{eqnarray}
As $\mats{\tilde{\Lambda}}^{1/2}$ in (\ref{eq:A4}) is non singular matrix and $\mats{\tilde{\Omega}}$ is Hermitian, The Ostrowski theorem in \cite{Johnson} can be applied, 
\begin{eqnarray}
\label{eq:A5}
\lambda_k(\mat{A}) &\geq&  \lambda_1(\mats{\tilde{\Lambda}}) \lambda_k(\mats{\tilde{\Omega}}), \\
\label{eq:A6}
&\geq& \lambda_{1}(\mat{BB}^\dag) \lambda_k(\mats{\Omega}), \\
\label{eq:A7}
&=&  \lambda_1(\mat{BB}^\dag) \lambda_k(\mat{CC}^\dag).
\end{eqnarray}
As $\mats{\tilde{\Omega}}$ is a $s \times s$ submatrix of the Hermitian matrix $\mats{\Omega}$, (\ref{eq:A6}) follows from the application of theorem 4.3.15 in \cite{Johnson}. Finally, (\ref{eq:A7}) follows from the fact that $\mat{U}$ is unitary matrix, and therefore 
$
\lambda_k(\mats{\Omega}) =\lambda_k(\mat{CC}^\dag). 
$
\end{subequations}

\end{document}